\tikzstyle{emptycircle}=[fill=white, draw=black, shape=circle, tikzit shape=circle, inner sep=0pt, minimum size=7mm]
\tikzstyle{output}=[fill=white, draw=black, shape=circle, minimum size=3pt]
\tikzstyle{GFR}=[fill={rgb,255: red,255; green,128; blue,0}, draw=black, shape=rectangle, minimum width=1cm, minimum height=0.5cm, rounded corners]
\tikzstyle{Protein}=[fill=green, draw=black, shape=rectangle, minimum width=1cm, minimum height=0.5cm, rounded corners]
\tikzstyle{Erk}=[fill=yellow, draw=black, shape=rectangle, minimum width=1cm, minimum height=0.5cm, rounded corners]
\tikzstyle{newstyle}=[draw=black, ->]
\theoremstyle{plain} 
\newtheorem{theorem}{Theorem}
\newtheorem{proposition}[theorem]{Proposition}
\theoremstyle{definition} 
\newtheorem{definition}[theorem]{Definition}
\newtheorem{example}[theorem]{Example}
\theoremstyle{remark} 
\newtheorem{remark}[theorem]{Remark}
\newcommand{\sset}[1]{\ensuremath{\mathbb{#1}}} 
\newcommand{\rr}{\sset{R}} 
\newcommand{\bx}{\ensuremath{\mathbf{x}}}
\newcommand{\by}{\ensuremath{\mathbf{y}}}
\newcommand{\bw}{\ensuremath{\mathbf{w}}}
\renewcommand{\bf}{\ensuremath{\mathbf{f}}}
\newcommand{\bg}{\ensuremath{\mathbf{g}}}
\DeclareMathOperator{\rank}{rank} 
\newcommand{\goth}[1]{\ensuremath{\mathcal{#1}}}
\providecommand{\keywords}[1] 
{
  \small	
  \textbf{Keywords:} #1
}
\newcommand{\mythanks}[1]{\vspace{1em}\begin{adjustwidth}{1cm}{1cm}\footnotesize{\emph{#1}}\end{adjustwidth}\vspace{1em}}
\title{Exact linear reduction for rational dynamical systems}
\author{
  Antonio Jim\'{e}nez-Pastor\footnote{\url{jimenezpastor@lix.polytechnique.fr}, LIX, CNRS, \'Ecole Polytecnique, Institute Polytechnique de Paris, Palaiseau, 91120, France.}, 
  Joshua Paul Jacob\footnote{\url{joshuapjacob@berkeley.edu}, University of California, Berkeley, 94720, California, USA.}, 
  Gleb Pogudin\footnote{\url{gleb.pogudin@polytechnique.edu}, LIX, CNRS, \'Ecole Polytecnique, Institute Polytechnique de Paris, Palaiseau, 91120, France.}
}
\date{}
\begin{document}








\maketitle

\begin{abstract}
  Detailed dynamical systems models used in life sciences may include dozens or even hundreds of state variables.
  Models of large dimension are not only harder from the numerical perspective (e.g., for parameter estimation or simulation), but it is also becoming challenging to derive mechanistic insights from such models.
  Exact model reduction is a way to address this issue by finding a self-consistent lower-dimensional projection of the corresponding dynamical system.
  A recent algorithm CLUE allows one to construct an exact linear reduction of the smallest possible dimension such that the fixed variables of interest are preserved.
  However, CLUE is restricted to systems with polynomial dynamics.
  Since rational dynamics occurs frequently in the life sciences (e.g., Michaelis-Menten or Hill kinetics), it is desirable to extend CLUE to the models with rational dynamics.
  
  In this paper, we present an extension of CLUE to the case of rational dynamics and demonstrate its applicability on examples from literature.
  Our implementation is available in version 1.5 of CLUE\footnote{\url{https://github.com/pogudingleb/CLUE}}.  
\end{abstract}

\keywords{exact reduction, dynamical systems, constrained lumping}

\mythanks{%
This work was supported by the Paris Ile-de-France region (via project ``XOR'').
GP was partially supported by NSF grants DMS-1853482, DMS-1760448, and DMS-1853650; and AJP was partially supported by Poul Due Jensen Grant 883901.}

\section{Introduction}\label{sec:introduction}

Dynamical systems modeling is one of the key mathematical tools for describing phenomena in life sciences.
Making such models realistic often requires taking into account a wide range of factors yielding models of high dimension (dozens or hundreds of state variables).
Because of their size, these models are challenging from the numerical standpoint (e.g., parameter estimation) and it may be hard to derive mechanistic insight from them.
One possible workaround is to use \emph{model reduction} techniques that replace a model with a simpler one while preserving, at least approximately, some of the important features of the original model.
For approximate model reduction, many powerful techniques have been developed including singular value decomposition~\cite{antoulas} and time-scale separation~\cite{okino1998}.

A complementary approach is to perform~\emph{exact model reduction}, that is, lower the dimension of the model without introducing approximation errors.
For example, exact linear lumping aims at writing a self-consistent system of differential equations for a set of \emph{macro-variables} in which each macro-variable is a linear combination of the original variables.
The case of the macro-variables being sums of the original variables has been studied for some important classes of biochemical models (see, e.g., ~\cite{Borisov,Conzelmann,Feret}) and for general rational dynamical systems in~\cite{Cardelli2017a,Cardelli2019}.
The latter line of research has culminated in the powerful ERODE software~\cite{Cardelli2017a} which finds the optimal partition of the original variables into macro-variables.

A recent algorithm from~\cite{Ovchinnikov2020} (and implemented in the software CLUE) allows, for a given set of linear forms in the state variables (the \emph{observables}), constructs a linear lumping of the smallest possible dimension such that the observables can be written as combinations of the macro-variables (i.e., the observables are preserved).
Unlike the prior approaches, the macro-variables produced by CLUE are allowed to involve any coefficients (not just zeroes and ones as before).
Thanks to this, one may obtain reductions of significantly lower dimension than it was possible before, see~\cite[Table~1]{Ovchinnikov2020}.

However, unlike ERODE, the algorithm used in CLUE was restricted to the models with polynomial dynamics.
Since rational dynamics appears frequently in life sciences (e.g., via Michaelis-Menten or Hill kinetics), it is desirable to extend CLUE to such models.
The goal of the present paper is to fill this gap.
We design and implement an algorithm computing an optimal linear lumping of a rational dynamical system given a set of observables.
We demonstrate the efficiency and applicability of this algorithm on several models from literature.
Our new algorithm is available in CLUE starting from version 1.5 at~\url{https://github.com/pogudingleb/CLUE}.

In fact, we present two algorithms: one is a relatively direct generalization of the original algorithm from~\cite{Ovchinnikov2020} which works well if there are only few different denominators in the model, and another is a randomized algorithm based on evaluation-interpolation techniques which can efficiently handle models with multiple denominators (see Table~\ref{tab:app2_table} for comparison). 
Both algorithms are based on the same key mathematical fact, going back to~\cite{Li1989}, that linear lumpings are in a bijective correspondence with the invariant subspaces of the Jacobian $J(\bx)$ of the system~\cite[Proposition II.1]{Ovchinnikov2020}.
However, the construction of such invariant subspaces as common invariant subspaces of the coefficients of $J(\bx)$ used in~\cite{Ovchinnikov2020} relies on the polynomiality of the ODE model.
The workaround used in the first algorithm is to make $J(\bx)$ a polynomial matrix by multiplying the common denominator if the entries.
However, if there is more than a couple of denominators, the size of the intermediate expressions becomes prohibitive.
Therefore, in the second algorithm, we used a different approach to replace a non-constant matrix $J(\bx)$ with a collection of constant matrices: we replace $J(\bx)$ by its evaluations at several points which can be exactly computed by means of automatic differentiation without writing down $J(\bx)$ itself.
Since the algorithm is sampling-based, it becomes probabilistic, and we guarantee that the result will be correct with a user-specified probability and also provide a possibility to perform a post-verification of the result.

The rest of the paper is organized as follows. 
Section~\ref{sec:preliminaries} contains the preliminaries on lumping and summarizes the main steps of the algorithms from~\cite{Ovchinnikov2020} for polynomial dynamics.
In Section~\ref{sec:algorithms}, we describe our two algorithms mentioned above.
Section~\ref{sec:performance} described our implementation of the algorithms and reports its runtimes.
In Section~\ref{sec:examples}, we illustrate how our algorithms can be applied to models from the literature.
We conclude in Section~\ref{sec:conclusions}.
Some proofs omitted in the main text are collected in the Appendix.


\section{Preliminaries and prior results}\label{sec:preliminaries}

\subsection{Preliminaries on lumping}

In this paper we study models defined by ODE systems of the form
\begin{equation}\label{equ:ode-system}
    \dot \bx = \bf(\bx),
\end{equation}
where $\bx = (x_1,\ldots,x_n)^T$ are the state variables and $\bf = (f_1,\ldots, f_n)^T$ with $f_1, \ldots, f_n \in \rr(\bx)$ (where $\rr(\bx)$ denotes the set of rational function in $\bx$ with real coefficients).
We will describe lumping and the related notions following~\cite[Section 2]{Ovchinnikov2020} but extending to the case of rational dynamics.

\begin{definition}[{Lumping}]\label{def:lumping}
  Consider a system of the form~\eqref{equ:ode-system}.
  A linear transformation $\by = L \bx$, where $\by = (y_1, \ldots, y_m)^T$ and $L \in \rr^{m\times n}$, is called \emph{lumping} if $\rank(L) = m$ and there exist rational functions $g_1, \ldots, g_m \in \rr(\by)$ such that 
  \[
    \dot{\by} = \bg(\by), \quad \text{where}\quad \bg = (g_1, \ldots, g_m)^T
  \]
  for every solution $\bx$ of~\eqref{equ:ode-system}.
  We call $m$ the \emph{dimension} of the lumping, and we will refer to $\by$ as \emph{macro-variables}.
\end{definition}

In other words, a lumping of dimension $m$ is a linear change of variables from $n$ to $m$ variables such that the new variables 
satisfy a self-contained ODE system.
In the language of differential geometry, a lumping is a linear map $\varphi$ from the state space such that there exists a vector field on the range of the map which is $\varphi$-related to $\mathbf{f}$ (see~\cite[Definition~1.54]{diffgem}).

\begin{example}
Consider the following differential system:
\[\left\{\begin{array}{rcl}
    \dot x_1 &{}={}& \displaystyle\frac{x_2^2 + 4x_2x_3 + 4x_3^2}{x_1^3 - x_2 - 2x_3},\\
    \dot x_2 &{}={}& \displaystyle\frac{4x_3 - 2x_1}{x_1 + x_2 + 2x_3},\\
    \dot x_3 &{}={}& \displaystyle\frac{x_1 + x_2}{x_1 + x_2 + 2x_3},
\end{array}\right.\]
 Then the matrix 
 \[\begin{pmatrix}1 & 0 & 0\\ 0 & 1 & 2\end{pmatrix},\]
 is a lumping of dimension 2, since:
 \[\left\{\begin{array}{rcl}
    \dot y_1 &{}={}& \dot x_1 = \displaystyle\frac{(x_2 + 2x_3)^2}{x_1^3 - x_2 - 2x_3} = \frac{y_2^2}{y_1^3 - y_2}\\
    \dot y_2 &{}={}& \dot x_2 + 2 \dot x_3 = \displaystyle\frac{2x_2 + 4x_3}{x_1+x_2+2x_3} = \frac{2y_2}{y_1 + y_2}.
 \end{array}\right.\]
\end{example}

One way to force a lumping to keep the information of interest is to fix a set of observables to be preserved. 
This leads to the notion of \emph{constrained lumping}.

\begin{definition}[Constrained lumping]
  Let $\bx_{obs}$ be a vector of linear forms in $\bx$ (i.e., there is a matrix $M\in \rr^{p\times n}$ with 
  $\bx_{obs} = M\bx$). 
  We say that a lumping $L$ of $\dot \bx = \bf(\bx)$ is a \emph{constrained lumping} with observables $\bx_{obs}$ if
  each entry of $\bx_{obs}$ can be expressed as a linear combination of $\by = L\bx$.
\end{definition}

\begin{remark}[On the constrained and partition-based lumpings]\label{rem:constrained_partition}
  Software ERODE~\cite{Cardelli2017b} can efficiently produce the minimal (in the sense of the dimension) lumping, in which the macro-variables correspond to a partition of the state variables. This means that $y_1 = \sum_{i \in S_1}x_i, \;\ldots\;, y_m = \sum_{i \in S_m} x_i$ such that $\{1, \ldots, n\} = S_1 \bigsqcup S_2 \bigsqcup \ldots\bigsqcup S_m$.
  
  In this case, we always have $y_1 + \ldots + y_m = x_1 + \ldots + x_n$, hence a lumping found by ERODE will be always a constrained lumping with the sum of the state variables being the only observable.
  Therefore, an algorithm (like the one proposed in this paper) for computing a constrained linear lumping of the smallest possible dimension will be always able to find either the lumping found by ERODE or even lump it further.
  To be fair, we would like to point out that ERODE is typically faster than our algorithm.
\end{remark}

\begin{remark}[Lumping via polynomialization]
    It is known that, by introducing new variables, a rational dynamical system can always be embedded into a polynomial one.
    Therefore, a natural approach to finding constrained linear lumpings for~\eqref{equ:ode-system} would be to combine such an embedding with any algorithm applicable to polynomial dynamics (e.g., CLUE).
    However, it has been demonstrated in~\cite[p. 149]{Cardelli2019} that such an embedding not only increases the dimension of the ambient space but also may miss some of the reductions of original model.
\end{remark}

\subsection{Overview of the CLUE algorithm for polynomial dynamics~\cite{Ovchinnikov2020}}\label{subsec:overview}

The algorithm for computing a constrained lumping of the minimal dimension for a given system $\dot{\bx} = \bf(\bx)$ with polynomial right-hand side and observables $\bx_{obs}$, designed and implemented in~\cite{Ovchinnikov2020}, can be summarized as follows:

\begin{algorithm}[H]
  \caption{Finding constrained linear lumping (polynomial case)} \label{alg:poly}
  \begin{description}
    \item[Input:] a \emph{polynomial} ODE system $\dot{\bx} = \bf(\bx)$ of dimension $n$; a list of observables $\bx_{obs} = A\bx$ for $A \in \rr^{s \times n}$.
    \item[Output:] minimal lumping $L$ containing $\bx_{obs}$.
  \end{description}
  
  \begin{enumerate}[label=\textbf{(\arabic*)}]
      \item\label{step:Jac_comp} Compute the Jacobian matrix $J(\bx)$ of $\bf$;
    \item\label{step:Jac_decomp} Write $J(\bx)$ as $J_1m_1 + \ldots + J_Nm_N$, where $m_1, \ldots, m_N$ are distinct monomials in $\bx$ and $J_1, \ldots, J_N$ are constant matrices;
    \item\label{step:saturate} Compute the minimal subspace $V$ of the space of linear forms in $\bx$ containing $\bx_{obs}$ and invariant under $J_1, \ldots, J_N$ (using \cite[Alg.~3 or~4]{Ovchinnikov2020});
    \item\label{step:return} Return matrix $L$ with rows being basis vectors of $V$.
  \end{enumerate}
\end{algorithm}

This algorithm is based on the criterion from~\cite{Li1989}, which states that a matrix $L$ is a lumping for the system $\dot \bx = \bf(\bx)$ if an only if the row space of $L$ is invariant under $J(\bx)$ for every value of $\bx$.
In order to use this criterion, it was shown in~\cite[Supplementary Materials, Lemma I.1]{Ovchinnikov2020}, that $L$ is a lumping for $\dot x = \bf(\bx)$ if and only if the row space of $L$ is invariant under $J_i$ for $1\leqslant i \leqslant N$ (as defined in Step~\ref{step:Jac_decomp} of Algorithm~\ref{alg:poly}). 
This reduces the problem of finding the lumping to the one solved in Step~\ref{step:saturate} of Algorithm~\ref{alg:poly}.


\section{Algorithm for rational dynamical systems}\label{sec:algorithms}

In this section, we will use the following ``finite'' version of the Jacobian-based criterion from~\cite{Li1989} which is proved in the Appendix.

\begin{restatable}{lemma}{invariant}\label{lem:gen_invariant}
    Consider the rational dynamical system $\dot \bx = \bf(\bx)$ and $J(\bx)$ the Jacobian matrix of $\bf(\bx)$. 
    Let $\goth{B}$ be any set of matrices spanning the vector space $\langle J(\bx) \mid \bx \in \rr^n \text{ and } J(\bx) \text{ is well-defined}\rangle$. 
    Then $L \in \rr^{r\times n}$ is a lumping if and only if the row space of $L$ is invariant with respect to all $J \in \goth{B}$.
\end{restatable}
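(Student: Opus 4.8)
The plan is to reduce the statement to the classical criterion of Li and Rabitz (as used in \cite[Proposition~II.1]{Ovchinnikov2020} via \cite{Li1989}), which says that $L$ is a lumping of $\dot\bx = \bf(\bx)$ if and only if the row space $W$ of $L$ is invariant under $J(\bx)$ for every $\bx \in \rr^n$ at which $J(\bx)$ is well-defined. Granting this, the whole content of the lemma is the elementary linear-algebra fact that a fixed subspace $W \subseteq \rr^n$ is invariant under \emph{every} matrix in a set $S \subseteq \rr^{n\times n}$ if and only if it is invariant under every matrix in the linear span $\langle S\rangle$, and hence (since invariance under a spanning set implies invariance under the span) if and only if it is invariant under every element of a chosen spanning set $\goth{B}$.

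Concretely, I would proceed as follows. First, record the Li--Rabitz criterion in the form above; if a self-contained argument is wanted, one direction is immediate (if $L$ is a lumping with $\dot\by = \bg(\by)$, differentiating $\by = L\bx$ and applying the chain rule forces $L J(\bx)$ to factor through $L$, i.e. $\ker L$ maps into $\ker L$ under $J(\bx)^T$-type considerations — equivalently $W = \operatorname{rowspace}(L)$ is $J(\bx)$-invariant for all admissible $\bx$), and the converse is the substantive part that is already established in the cited literature for rational $\bf$. Second, prove the linear-algebra lemma: for a fixed subspace $W$, the set $\{ A \in \rr^{n\times n} : A^T W \subseteq W\}$ (using whichever convention makes ``row space invariant'' literal) is a linear subspace of $\rr^{n\times n}$, because it is cut out by the linear conditions ``$\langle A v, u\rangle = 0$ for all $v \in W$, $u \in W^\perp$''. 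Therefore $W$ is invariant under all of $S$ iff $S$ lies in this subspace iff $\langle S\rangle$ lies in this subspace iff $\goth{B} \subseteq \langle S\rangle = \langle J(\bx) \mid \ldots\rangle$ does. Third, combine: taking $S = \{ J(\bx) : \bx \in \rr^n,\ J(\bx) \text{ well-defined}\}$, the Li--Rabitz condition ``$W$ invariant under every $J(\bx)$'' is equivalent to ``$W$ invariant under every $J \in \goth{B}$'', which is exactly the claim.

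The only genuine subtlety — and the step I expect to need the most care — is the interplay between rationality and the set of admissible evaluation points: one must make sure that the set of $\bx$ at which $J(\bx)$ is well-defined is Zariski-dense in $\rr^n$ (the complement is contained in a proper algebraic set, the vanishing locus of the common denominator), so that the linear span $\langle J(\bx) \mid \bx \text{ admissible}\rangle$ is the same whether or not we could evaluate at the bad points, and so that the Li--Rabitz criterion — whose classical statement is about the formal/generic Jacobian — transfers correctly to this ``every admissible real point'' formulation. Once density is in hand, the argument is purely the span-versus-spanning-set observation above, with no further computation required.
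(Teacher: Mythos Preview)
Your proposal is correct and follows essentially the same approach as the paper: both reduce to the Li--Rabitz/Proposition~II.1 criterion and then argue that invariance of a fixed subspace under a set of matrices is equivalent to invariance under its linear span (and hence under any spanning set $\goth{B}$). The paper carries out that linear-algebra step by explicitly picking bases of evaluations and of $\goth{B}$, relating them by an invertible change-of-basis matrix $C$, and transporting the companion matrices $A(\bx_i)$ through $C$; your packaging via ``$\{A : WA \subseteq W\}$ is a linear subspace of $\rr^{n\times n}$'' is a slicker way to say the same thing, and your remark about Zariski density of the admissible evaluation points makes explicit a passage the paper leaves implicit.
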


\begin{remark}[Lemma~\ref{lem:gen_invariant} and the polynomial case]
In the context of Algorithm~\ref{alg:poly}, each value of $J(\mathbf{x})$ is a linear combination of $J_1, \ldots, J_N$, so $\goth{B}$ can be taken to be $\{J_1, \ldots, J_N\}$.
Thus Lemma~\ref{lem:gen_invariant} implies the correctness of Algorithm~\ref{alg:poly}.
\end{remark}

\subsection{Straightforward extension of Algorithm~\ref{alg:poly}}\label{sec:direct}

In the case of rational dynamics, the Jacobian matrix $J(\bx)$ has only rational function entries, so we can compute the common denominator $q(\bx)$ such that $J(\bx) = \frac{B(\bx)}{q(\bx)}$, where $B(\bx)$ is a matrix with polynomial entries. 
Let $B(\bx) = B_1 m_1 + \ldots + B_N m_N$ be a decomposition where $B_1, \ldots, B_N$ are constant matrices and $m_1, \ldots, m_N$ are distinct monomials appearing in $B(\bx)$ (compare with Step~\ref{step:Jac_decomp} of Algorithm~\ref{alg:poly}).
Then, for each value of $\bx$ not annihilating $q$, $J(\bx)$ is a linear combination of $B_1, \ldots, B_N$, so one can take $\goth{B} = \{B_1, \ldots, B_N\}$ in Lemma~\ref{lem:gen_invariant}.
This yields the following algorithm for rational case:

\begin{algorithm}[H]
  \caption{Finding constrained linear lumping (rational case)} \label{alg:build_matrices_rat}
  \begin{description}
    \item[Input:] a \emph{rational} ODE system $\dot{\bx} = \bf(\bx)$ of dimension $n$; a list of observables $\bx_{obs} = A\bx$ for $A \in \rr^{s \times n}$.
    \item[Output:] minimal lumping $L$ containing $\bx_{obs}$.
  \end{description}
  
  \begin{enumerate}[label=\textbf{(\arabic*)}]
      \item Compute $J(\mathbf{x})$, the Jacobian of $f(\mathbf{x})$;
      \item[\textbf{(2.1)}] Compute $p(\mathbf{x})$, the common denominator of the entries of $J(\mathbf{x})$
      \item[\textbf{(2.2)}] Set $B(\mathbf{x}) := p(\mathbf{x}) \cdot J(\mathbf{x})$
      \item[\textbf{(2.3)}] Write $B(\bx)$ as $B_1 m_1 + \ldots + B_N m_N$, where $m_1, \ldots, m_N$ are distinct monomials and $B_1, \ldots, B_N$ are constant matrices.
      \item[\textbf{(3)}] Compute the minimal subspace $V$ of the space of linear forms in $\bx$ containing $\bx_{obs}$ and invariant under $B_1, \ldots, B_N$ (using \cite[Alg.~3 or~4]{Ovchinnikov2020});
      \item[\textbf{(4)}] Return matrix $L$ with rows being basis vectors of $V$;
  \end{enumerate}
\end{algorithm}
However, the size of the expressions involved after bringing everything to the common denominator and differentiation can be prohibitively large. 
The following toy example illustrates this phenomenon (for a comparison with the more refined Algorithm~\ref{alg:find_cl}, see Section~\ref{sec:comparison}).

\begin{example}\label{exm:too_big}
    Consider the following differential system:
    \[\dot x = \frac{y - z}{x - y},\quad \dot y = \frac{x + z}{x + y},\quad \dot z = \frac{x+y+z}{z - x - y}.\]
    
    Here is the Jacobian matrix:
    \[J(x,y,z) = \begin{pmatrix}
        -\frac{y-z}{(x-y)^2} & \frac{y-z}{(x+y)^2} & \frac{2z}{(x+y-z)^2}\\
        -\frac{x-z}{(x-y)^2} & -\frac{x+z}{(x+y)^2} & \frac{2z}{(x+y-z)^2}\\
        -\frac{-1}{x-y} & \frac{1}{x+y} & \frac{-2(x+y)}{(x+y-z)^2}\\
    \end{pmatrix}\]
    
    The polynomial $p(\bx)$ from Algorithm~\ref{alg:build_matrices_rat} will be equal to  $(x-y)^2(x+y)^2(x+y-z)^2$. 
    Then the matrix $B(x,y,z)$ will satisfy
    \[J(x,y,z) = \frac{1}{(x-y)^2(x+y)^2(x+y-z)^2}B(x,y,z).\]
    In this example, the matrix $B(x,y,z)$ has all entries of degree 5 that we need to expand which is substantially more complicated than the original expressions.
\end{example}

\subsection{The main algorithm based on evaluation-interpolation}

Our main algorithm is built upon the following observation: we can evaluate $J(\bx)$ efficiently at a given point without writing down the symbolic matrix explicitly (which would be quite large even in small examples such as Example~\ref{exm:too_big}), using automatic differentiation techniques. 
Then we can use sufficiently many such evaluations to span the space~$\langle J(\bx) \mid \bx \in \rr^n \text{ and } J(\bx) \text{ is well-defined}\rangle$ from Lemma~\ref{lem:gen_invariant}.
The resulting Algorithm~\ref{alg:find_cl} is shown below. It relies on Algorithm~\ref{alg:sampling} for generating ``sufficiently many'' evaluations which we present in Section~\ref{sec:sampling}.

\begin{algorithm}[H]
  \caption{Finding constrained linear lumping (probabilistic)}\label{alg:find_cl}
  \begin{description}
    \item[Input:] a rational ODE system $\dot{\bx} = \bf(\bx)$ of dimension $n$; a list of observables $\bx_{obs} = A\bx$ for $A \in \rr^{s \times n}$; and a real number $\varepsilon \in (0, 1)$.
    \item[Output:] minimal lumping $L$ containing $\bx_{obs}$.\\ The result is correct with probability at least $1 - \varepsilon$.
  \end{description}
  
  \begin{enumerate}[label=\textbf{(\arabic*)}]
      \item Compute $J(\mathbf{x})$, the Jacobian of $f(\mathbf{x})$.
      \item\label{step:points} Compute points $\bx_1, \ldots, \bx_M \in \mathbb{Q}^n$ such that $J(\bx_1),\ldots,J(\bx_M)$ span $\langle J(\bx) \mid \bx \in \rr^n \text{ and } J(\bx) \text{ is well-defined}\rangle$ with probability at least $1 - \varepsilon$ (using Algorithm~\ref{alg:sampling}).
      \item\label{step:other} Compute the minimal subspace $V$ of linear forms in $\bx$ containing $\bx_{obs}$ and invariant under $J(\bx_1), \ldots, J(\bx_M)$ (using \cite[Alg.~3 or~4]{Ovchinnikov2020});
      \item Return matrix $L$ with rows being basis vectors of $V$;
  \end{enumerate}
\end{algorithm}

\begin{remark}[On probabilities in algorithms]\label{rem:prob}
  The outputs of Algorithm~\ref{alg:find_cl} and~\ref{alg:sampling} are guaranteed to be correct with a user-specified probability.
  This should be understood as follows.
  The algorithm makes some random choices, that is, draws a point from a probability space.
  The specification of the algorithm means that, for the fixed input, the probability of the output being correct (with respect to the probability space above) is at least ${1 - }\varepsilon$.
  
  In the highly unlikely case (the used probability bounds are quite conservative) the computation at Step~\ref{step:points} of Algorithm~\ref{alg:find_cl} was incorrect, the computed space $V$ will be a subspace of the ``true'' $V$, so the resulting matrix will not provide a lumping, and the reduced model returned by the software will be incorrect.
  The returned model can be checked using a direct substitution. 
  This substitution is not performed in our implementation by default, but we offer a method to gain full confidence in the result.
\end{remark}

\begin{proposition}
  Algorithm~\ref{alg:find_cl} is correct, that is the returned matrix $L$ is a minimal lumping with probability at least $1 - \varepsilon$ (see Remark~\ref{rem:prob}).
\end{proposition}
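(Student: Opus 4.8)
The plan is to reduce the correctness of Algorithm~\ref{alg:find_cl} to Lemma~\ref{lem:gen_invariant} together with the already-established correctness of the subspace routine \cite[Alg.~3 or~4]{Ovchinnikov2020}, treating the probabilistic Step~\ref{step:points} as the only source of error. First I would name the event $E$ that the points $\bx_1, \ldots, \bx_M$ produced at Step~\ref{step:points} are such that $J(\bx_1), \ldots, J(\bx_M)$ span the vector space $W := \langle J(\bx) \mid \bx \in \rr^n \text{ and } J(\bx) \text{ is well-defined}\rangle$. By the specification of Algorithm~\ref{alg:sampling} (invoked at Step~\ref{step:points}), we have $\Pr(E) \geq 1 - \varepsilon$. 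All remaining steps of Algorithm~\ref{alg:find_cl} are deterministic, so it suffices to show that \emph{conditioned on $E$}, the returned matrix $L$ is a minimal lumping containing $\bx_{obs}$.

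So I would argue as follows under the assumption that $E$ holds. Then $\goth{B} := \{J(\bx_1), \ldots, J(\bx_M)\}$ is a spanning set of $W$, hence a legitimate choice in Lemma~\ref{lem:gen_invariant}. Step~\ref{step:other} computes, by the correctness of \cite[Alg.~3 or~4]{Ovchinnikov2020}, the minimal subspace $V$ of the space of linear forms in $\bx$ that contains $\bx_{obs}$ and is invariant under every matrix in $\goth{B}$; here I should be slightly careful about the convention that "invariant under $J$" for the row space versus the action on linear forms, but this is exactly the same identification already used in Algorithm~\ref{alg:poly}, so I would just cite that. By Lemma~\ref{lem:gen_invariant}, a matrix $L$ is a lumping if and only if its row space is invariant under all of $\goth{B}$; combined with minimality of $V$ among such subspaces containing $\bx_{obs}$, the matrix $L$ whose rows form a basis of $V$ (Step~(4)) is a constrained lumping with observables $\bx_{obs}$ of the smallest possible dimension. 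Finally $\rank(L) = \dim V$ by construction, so $L$ is indeed a valid lumping matrix in the sense of Definition~\ref{def:lumping}.

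Putting the two pieces together: the event "$L$ is a minimal lumping containing $\bx_{obs}$" contains $E$, hence has probability at least $\Pr(E) \geq 1 - \varepsilon$, which is the claim. I do not expect any genuine obstacle here: the only non-trivial input, Lemma~\ref{lem:gen_invariant}, is assumed, and the subspace computation is quoted from \cite{Ovchinnikov2020}. The one point that deserves a sentence of care is the direction of the probability bound — one should note (as in Remark~\ref{rem:prob}) that if $E$ fails then the spanning set $\goth{B}$ may be too small, so $V$ can only shrink, and therefore the failure mode is a $V$ that is a proper subspace of the true one; this does not affect the lower bound $1 - \varepsilon$ on the probability of correctness but is worth stating so the reader understands the nature of the error. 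I would keep the write-up short, essentially the three paragraphs above compressed into one, since everything reduces to invoking Lemma~\ref{lem:gen_invariant} and the specification of Algorithm~\ref{alg:sampling}.
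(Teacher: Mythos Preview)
Your proposal is correct and follows essentially the same approach as the paper: condition on the event that the sampled evaluations span $\langle J(\bx)\rangle$, invoke Lemma~\ref{lem:gen_invariant} with $\goth{B} = \{J(\bx_1),\ldots,J(\bx_M)\}$ and the correctness of \cite[Alg.~3 or~4]{Ovchinnikov2020}, and then use the probability guarantee of Algorithm~\ref{alg:sampling}. If anything, your write-up is more careful than the paper's own proof, which does not spell out the minimality argument or the failure mode as explicitly as you do.
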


\begin{proof}
    Assume that we have found points $\bx_1,\ldots, \bx_M \in \rr^n$ such that the evaluations $J(\bx_1), \ldots, J(\bx_M)$ span $\langle J(\bx) \mid \bx \in \rr^n \text{ and }J(\bx) \text{ is well-defined}\rangle$, and consider $L$, a matrix with the rows being basis vectors of $V$, computed using~\cite[Algorithm~3 or Algorithm~4]{Ovchinnikov2020}.
    Then it will be a lumping by Lemma~\ref{lem:gen_invariant}.
    
    If that is not the case, then it means that $J(\bx_1), \ldots, J(\bx_M)$ do not span $\langle J(\bx) \mid \bx \in \rr^n \text{ and }J(\bx) \text{ is well-defined}\rangle$. 
    This event only happens with probability at most ${\varepsilon}$. 
    Hence the algorithm is correct with at least probability~${1 - }\varepsilon$.
\end{proof}


\subsection{Generating ``sufficiently many'' evaluations}\label{sec:sampling}

In order to complete Algorithm~\ref{alg:find_cl}, we will present in this section a procedure (Algorithm~\ref{alg:sampling}) for sampling values of $J(\bx)$ spanning the whole $\langle J(\bx) \mid \bx \in \rr^n \text{ and } J(\bx) \text{ is well-defined}\rangle$ with high probability.
The main theoretical tool to achieve the desired probability is the following proposition (proved in the appendix) based on the Schwartz-Zippel lemma~\cite[Proposition~98]{Zippel}.

\begin{restatable}{proposition}{sampling}\label{prop:sampling}
  Let $\bf = (f_1, \ldots, f_n)^T$ be a vector of elements of $\rr(\bx)$, where $\bx = (x_1, \ldots, x_n)^T$, and $\varepsilon \in (0, 1)$ be a real number.
  Let $D_d$ (resp.,  $D_n$) be  an integer such that the degree of the denominator (resp., numerator) of $f_i$ does not exceed $D_d$ (resp., $D_n$) for every $1 \leqslant i \leqslant n$.
  
  Let $J(\bx)$ be the Jacobian matrix of $\bf$ and $\bx_1, \ldots, \bx_m$ be points such that $J(\bx_1), \ldots, J(\bx_m)$ do not span $\langle J(\bx) \mid \bx \in \rr^n \text{ and } J(\bx) \text{ is well-defined}\rangle$.
  Consider a point $\bx_{m + 1}$ with each coordinate being an integer sampled uniformly at random from $\{1, 2, \ldots, N\}$ where 
  \[
    N > \frac{D_n + (2m + 1)D_d}{\varepsilon} + n D_d.
  \]
  Then we have
  \[
    \mathbb{P}[J(\bx_{m + 1}) {\in} \langle J(\bx_1), \ldots, J(\bx_m)\rangle \mid J(\bx_{m + 1}) \text{ is well-defined}] {<} \varepsilon.
  \]
\end{restatable}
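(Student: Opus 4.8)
The plan is to reduce the claim to the Schwartz--Zippel lemma applied to two explicit polynomials attached to the system: a polynomial $g$ whose vanishing at $\bx_{m+1}$ is forced whenever $J(\bx_{m+1})$ lies in $\langle J(\bx_1),\ldots,J(\bx_m)\rangle$ and $J(\bx_{m+1})$ is well-defined, and the polynomial $h=\prod_{i=1}^n q_i$, where $q_i$ is the denominator of $f_i$, whose non-vanishing is exactly the condition that $J(\bx_{m+1})$ be well-defined.

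First I would set $U:=\langle J(\bx_1),\ldots,J(\bx_m)\rangle$ and $W:=\langle J(\bx)\mid \bx\in\rr^n,\ J(\bx)\text{ well-defined}\rangle$, so that $U\subsetneq W$ by hypothesis, and let $r=\dim U\le m$. Identifying $n\times n$ matrices with vectors in $\rr^{n^2}$, I would fix a subset $I\subseteq\{1,\ldots,m\}$ of size $r$ and a set $K$ of $r$ matrix-entry positions such that the $r\times r$ submatrix of $(J(\bx_i))_{i\in I}$ picked out by $K$ is invertible, with determinant $c\ne0$; then $(J(\bx_i))_{i\in I}$ is a basis of $U$. For a position $(a,b)\notin K$, let $D_{ab}(\bx)$ be the $(r+1)\times(r+1)$ minor obtained by appending the row $J(\bx)$ and the column $(a,b)$. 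A row reduction (subtract from the last row the unique combination of the rows $(J(\bx_i))_{i\in I}$ agreeing with $J(\bx)$ on the positions in $K$) gives $D_{ab}(\bx)=\pm c\cdot\bigl(J(\bx)-\sum_{i\in I}\gamma_i(\bx)J(\bx_i)\bigr)_{ab}$ for suitable rational functions $\gamma_i$. Hence $J(\bx)\in U$ if and only if $D_{ab}(\bx)=0$ for all $(a,b)\notin K$; in particular, since $U\ne W$, there is a position $(a_0,b_0)$ for which $D:=D_{a_0 b_0}$ is not the zero rational function.

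Next I would bound degrees. Writing $f_i=p_i/q_i$ with $\deg p_i\le D_n$ and $\deg q_i\le D_d$, every entry $\partial f_i/\partial x_j$ of $J$ has numerator of degree $\le D_n+D_d-1$ and denominator $q_i^2$. Expanding $D$ along its last row writes it as an $\rr$-linear combination of at most $r+1$ such entries (the cofactors being constants built from $(J(\bx_i))_{i\in I}$), so after clearing denominators its numerator $g$ has degree $\le (D_n+D_d-1)+2rD_d\le D_n+(2m+1)D_d-1$, and its denominator divides $\prod_{i=1}^n q_i^2$. Since $h(\bx)=\prod_{i=1}^n q_i(\bx)\ne0$ is precisely the condition that $J(\bx)$ be well-defined, and also guarantees the denominator of $D$ does not vanish at $\bx$, we conclude: if $J(\bx_{m+1})$ is well-defined and $J(\bx_{m+1})\in U$, then $g(\bx_{m+1})=0$. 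Here $g\not\equiv0$ and $h\not\equiv0$, and $\deg h\le nD_d<N$.

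Finally I would combine the two bounds. Conditioning on $J(\bx_{m+1})$ being well-defined, i.e.\ on $h(\bx_{m+1})\ne 0$ (an event of positive probability since $\deg h\le nD_d<N$),
\[
  \mathbb{P}\bigl[J(\bx_{m+1})\in U \mid h(\bx_{m+1})\ne0\bigr] \le \frac{\mathbb{P}[g(\bx_{m+1})=0]}{\mathbb{P}[h(\bx_{m+1})\ne0]} \le \frac{(D_n+(2m+1)D_d)/N}{1-nD_d/N} = \frac{D_n+(2m+1)D_d}{N-nD_d},
\]
where both estimates use the Schwartz--Zippel lemma (\cite[Proposition~98]{Zippel}) over the box $\{1,\ldots,N\}$, legitimate because $N>nD_d$. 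The hypothesis $N>\frac{D_n+(2m+1)D_d}{\varepsilon}+nD_d$ is exactly what forces the right-hand side to be strictly less than $\varepsilon$. The main obstacle, I expect, is the bookkeeping in the middle: justifying cleanly that membership in $U$ is detected by the vanishing of the minors $D_{ab}$, extracting from $U\subsetneq W$ a single minor that is not identically zero, and being careful that only $\le r+1\le m+1$ distinct denominators $q_i^2$ enter the linear combination --- this last point is what produces the coefficient $2m+1$ in the degree bound rather than one involving $n$.
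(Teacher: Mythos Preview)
Your proof is correct and follows essentially the same route as the paper: both arguments vectorize the Jacobians, extract a nonzero $(r+1)\times(r+1)$ minor from the matrix whose rows are $J(\bx_i)$ (for $i$ in a basis index set) together with the symbolic row $J(\bx)$, observe that this minor must vanish whenever $J(\bx_{m+1})\in U$, bound the degree of its numerator by $D_n+(2m+1)D_d$ via the expansion along the symbolic row, and finish with Schwartz--Zippel applied to this numerator and to $\prod_i q_i$. Your write-up is slightly more explicit than the paper's (you spell out why membership in $U$ is detected by these minors and why only $\le r+1$ denominators enter, and you shave one off the degree via the quotient rule), but the substance is the same.
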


Proposition~\ref{prop:sampling} tells us that, when the points are sampled from a large enough range, if a value of $J(\bx)$ belongs to the space spanned by the previous evaluations, then these evaluations span, with high probability, the whole space $\langle J(\bx) \mid \bx \in \rr^n \text{ and } J(\bx) \text{ is well-defined}\rangle$.
This yields the following algorithm.

\begin{algorithm}[H]
  \caption{Sampling the values of the Jacobian}\label{alg:sampling}
  \begin{description}
    \item[Input:] \begin{itemize}
    \item[]
        \item $n$-dimensional vector $\mathbf{f}$ of rational functions in $\bx = (x_1, \ldots, x_n)$;
        \item real number $\varepsilon \in (0, 1)$.
    \end{itemize}
    \item[Output:] Points $\bx_1, \ldots, \bx_M \in \mathbb{Q}^n$ such that
    \begin{equation}\label{equ:spaces_equality}
      \langle J(\bx_i) \mid 1 \leqslant i \leqslant n \rangle = \langle J(\bx) \mid \bx \in \rr^n \text{ and }J(\bx) \text{ is well-defined} \rangle,
    \end{equation}
    with probability at least ${1 - }\varepsilon$, where $J(\bx)$ is the Jacobian matrix of $\mathbf{f}$.
  \end{description}
  
  \begin{enumerate}[label=\textbf{(\arabic*)}]
      \item Compute $D_d$ (resp., $D_n$) as the maximum of the degrees of denominators (resp., numerators) of entries of $\mathbf{f}$, respectively.
      \item $M \gets 0$
      \item\label{step:sampling} Repeat
      \begin{enumerate}
          \item Compute $\bx_{M + 1}$ by sampling each coordinate uniformly at random from $\{1, 2, \ldots, N\}$, where
          \[
            N = \left[ \frac{D_n + (2M + 1)D_d}{\varepsilon} + n D_d \right] + 1.
          \]
          Repeat sampling until none of the denominators of $\bf$ vanishes at $\bx_{M + 1}$.
          \item Compute $J(\bx_{M + 1})$ using automatic differentiation (see Section~\ref{sec:speedup}).
          \item If $J(\bx_{M + 1}) \in \langle J(\bx_i) \mid 1 \leqslant i \leqslant M\rangle$, return $\bx_1, \ldots, \bx_M$.
          \item $M \gets M + 1$
      \end{enumerate}
  \end{enumerate}
\end{algorithm}

\begin{proposition}
  Algorithm~\ref{alg:sampling} is correct, that is, for the returned points $\mathbf{x}_1, \ldots, \mathbf{x}_M$ the relation~\eqref{equ:spaces_equality} holds with probability at least $1 - \varepsilon$ (see Remark~\ref{rem:prob}).
\end{proposition}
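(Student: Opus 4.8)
The plan is to check two things: that Algorithm~\ref{alg:sampling} terminates, and that its output satisfies~\eqref{equ:spaces_equality} with the stated probability. Write $\mathcal{J} := \langle J(\bx) \mid \bx \in \rr^n \text{ and } J(\bx) \text{ is well-defined}\rangle$ and $d := \dim \mathcal{J} \leqslant n^2$. For termination: the inner resampling loop in Step~\ref{step:sampling}(a) stops almost surely because the product of the denominators of $\bf$ has degree at most $nD_d$, so by Schwartz--Zippel at most a fraction $nD_d/N < 1$ of $\{1,\ldots,N\}^n$ annihilates some denominator (using $N \geqslant nD_d + 1$), hence admissible points exist; the outer loop stops because any iteration that does not return increments $M$ while keeping $J(\bx_1),\ldots,J(\bx_M)$ linearly independent, and these live in $\mathcal{J}$, so the loop returns after at most $d+1$ iterations and the returned $M$ is $\leqslant d$.

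For correctness, the key point is that when the algorithm returns $\bx_1,\ldots,\bx_M$ the evaluations $J(\bx_1),\ldots,J(\bx_M)$ are linearly independent --- otherwise the test in Step~\ref{step:sampling}(c) would have triggered at an earlier counter --- so $\langle J(\bx_1),\ldots,J(\bx_M)\rangle = \mathcal{J}$ precisely when $M = d$; hence the failure event is $\{M < d\}$. I would split it over the stopping counter, $\{M < d\} = \bigsqcup_{k=0}^{d-1} E_k$ with $E_k$ the (pairwise disjoint) events ``counter $k$ is reached and the algorithm returns there''. On $\{\text{counter }k\text{ reached}\}$ the matrices $J(\bx_1),\ldots,J(\bx_k)$ are independent, hence span a space of dimension $k < d$ that is not $\mathcal{J}$; so Proposition~\ref{prop:sampling} applies with $m = k$, using that the range $N$ chosen at counter $k$ satisfies $N > \frac{D_n + (2k+1)D_d}{\varepsilon} + nD_d$ and that the resampling realizes the conditioning ``$J(\bx_{k+1})$ well-defined'' (each $f_i = p_i/q_i$ in lowest terms has $\partial_j f_i = (q_i\partial_j p_i - p_i\partial_j q_i)/q_i^2$, whose poles lie in $\{q_i = 0\}$). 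Conditioning on $\bx_1,\ldots,\bx_k$, Proposition~\ref{prop:sampling} gives $\mathbb{P}[\text{return at counter }k \mid \bx_1,\ldots,\bx_k] < \varepsilon$, so $\mathbb{P}[E_k] < \varepsilon\,\mathbb{P}[\text{counter }k\text{ reached}]$, and summing yields $\mathbb{P}[M < d] < \varepsilon\sum_{k=0}^{d-1}\mathbb{P}[\text{counter }k\text{ reached}] \leqslant d\,\varepsilon$.

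This last union bound is the step I expect to cause the real trouble: the per-iteration guarantee of Proposition~\ref{prop:sampling} is only $\varepsilon$, and only the trivial bound $1$ is available for the probability of reaching a given counter (indeed it is close to $1$ in a typical run), so summing over the up to $d \leqslant n^2$ iterations loses a factor $d$ and gives success probability $1 - d\varepsilon$ rather than the claimed $1 - \varepsilon$. To land exactly on $1 - \varepsilon$ I would run the sampling procedure with parameter $\varepsilon/n^2$ (or $\varepsilon/d$) in the formula for $N$ in Step~\ref{step:sampling}(a); obtaining $1 - \varepsilon$ with the constants exactly as written would instead require an iteration-aware sharpening of Proposition~\ref{prop:sampling} that controls the determinantal polynomials attached to the nested subspaces $\langle J(\bx_1),\ldots,J(\bx_k)\rangle$ simultaneously, which I do not see how to carry out directly. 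The remaining ingredients --- the Schwartz--Zippel estimates and the well-definedness bookkeeping --- are routine.
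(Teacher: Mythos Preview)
Your analysis goes well beyond the paper's own proof, which is essentially three sentences: assume the output is wrong; then at the moment of return $J(\bx_{M+1})$ landed in $\langle J(\bx_1),\ldots,J(\bx_M)\rangle$ while this span was still proper; Proposition~\ref{prop:sampling} bounds ``this event'' by~$\varepsilon$. The paper applies the proposition once, at the (random) terminal counter $M$, as if that counter were deterministic.

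You are right to flag this. The terminal counter $M$ is a stopping time, and a rigorous argument must decompose the failure event over its possible values $k=0,\ldots,d-1$ exactly as you do. Proposition~\ref{prop:sampling} controls only each conditional return probability by $\varepsilon$, and the probability of reaching a given counter $k$ is typically close to $1$ in a successful run, so nothing makes the sum collapse: the honest bound is $\mathbb{P}[M<d] < \sum_{k<d} \varepsilon\,\mathbb{P}[\text{reach }k] \leqslant d\varepsilon \leqslant n^{2}\varepsilon$. In other words, the paper's proof glosses over precisely the step you identify, and with the constants exactly as written in the algorithm only $1-n^{2}\varepsilon$ can be established. Your fix --- replace $\varepsilon$ by $\varepsilon/n^{2}$ in the formula for $N$ --- is the standard remedy and costs only a logarithmic increase in the bit-size of the sampled integers; no iteration-aware sharpening of Proposition~\ref{prop:sampling} is needed. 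The termination argument and the denominator bookkeeping you supply are also absent from the paper and are correct.
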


\begin{proof}
    Assume that the output condition~\eqref{equ:spaces_equality} for Algorithm~\ref{alg:sampling} does not hold.
    Since the algorithm has returned, the value $J(\bx_{M + 1})$ belonged to the span of $J(\bx_1), \ldots, J(\bx_M)$.
    Then Proposition~\ref{prop:sampling} implies that the probability of this event is at most ${\varepsilon}$, so the output of the algorithm is correct with a probability of at least ${1 - }\varepsilon$.
\end{proof}


\subsection{Improving the efficiency of Algorithm~\ref{alg:find_cl}}\label{sec:speedup}

\paragraph{Evaluating the Jacobian.} 
An attractive feature of Algorithm~\ref{alg:find_cl} is that it does not require a symbolic expression for the Jacobian $J(\bx)$ but only a way to efficiently evaluate it at chosen points.
These evaluations can be preformed efficiently using exact automatic differentiation~\cite{Baydin2018,Elliott2009,Wengert1964,HitchAD}.

More precisely, our implementation uses a forward algorithm for automatic differentiation by computing with the extended dual numbers.
Extended dual numbers are tuples $(a_0, a_1, \ldots, a_n)$ of real numbers (where $n$ is the dimension of the model) with the following arithmetic rules:
\begin{align*}
    (a_0,a_1,\ldots, a_n) + (b_0,b_1,\ldots, b_n) &= (a_0 + b_0, \ldots, a_n + b_n),\\
    (a_0,a_1,\ldots, a_n)(b_0,b_1,\ldots, b_n) &= (a_0b_0, a_1b_0 + a_0b_1,\ldots, a_0b_n + a_n b_0).
\end{align*}
If one evaluates a rational function $f(\bx)$ at the point 
\begin{equation}\label{eq:point}
  \left((x_1, 1, 0, \ldots, 0), (x_2, 0, 1, \ldots, 0), \ldots, (x_n, 0, 0, \ldots, 1)\right),
\end{equation}
the output will be (see~\cite[Section~4]{HitchAD}) precisely the vector
\[
\left(f(\bx), \frac{\partial f}{\partial x_1}(\bx),\ldots, \frac{\partial f}{\partial x_n}(\bx)\right).
\]
Thus, we can evaluate $J(\bx)$ by evaluating $f_1, \ldots, f_n$ at~\eqref{eq:point}.
The complexity of such evaluation is directly related to the cost of evaluating $f_1, \ldots, f_n$. 
If evaluating each of $f_i(\bx)$ costs $A$ operations, then evaluating the whole $J(\bx)$ has cost $\mathcal{O}(n^2A)$. 
There are techniques (based on backward mode automatic differentiation) that could compute these evaluations within $\mathcal{O}(nA)$ operations~\cite{Baur1983}. 
For the moment, we decided to stick to the forward mode due to its simplicity and generalizability.

\paragraph{Selecting starting evaluation points.}
{The sampling strategy for the evaluation points $\bx_1, \ldots, \bx_M$ at Step~\ref{step:sampling} of Algorithm~\ref{alg:sampling} derived from Proposition~\ref{prop:sampling} is used to ensure (with the prescribed probability) that we do not stop sampling too early.
Therefore, before going into Step~\ref{step:sampling}, we can start with choosing several evaluation points $\bx_1, \ldots, \bx_\ell$ in a different way than it is described in Step~\ref{step:sampling}, and the probability of correctness will be preserved.}

More precisely, before starting to sample evaluation points as in Step~\ref{step:sampling}, our implementation of Algorithm~\ref{alg:sampling} proceeds as follows:
\begin{enumerate}
    \item \emph{Sparse points.}
    For every $1\leqslant i \leqslant n$, we do the following.
    Let $\mathbf{e}_1$ be the $i$-th standard basis vector in $\mathbb{R}^n$.
    We consider the point $\mathbf{e}_i$ and analyze the variables that need to be made different from zero so that the Jacobian $J(\bx)$ will be well-defined.
    We do this modifications in $\mathbf{e}_i$ and obtain (typically sparse) evaluation point.
    \item \emph{``Small points''.}
    Then we consider randomly sampled evaluation points but sample them from a small range until we detect a linear dependence between the evaluations of the Jacobian at already sampled points.
\end{enumerate}

These evaluation points yield simpler evaluations of the Jacobian $J(\bx)$ which simplifies further computations.
The optimization has led to significant speedup, see Table~\ref{tab:sparse_points} (the models used in the table are described in Section~\ref{sec:runtimes}).
\begin{table}[H]
    \centering
    \begin{tabular}{|c|c||c|c|}
        \hline
        \textbf{Model} &\textbf{Speedup} & \textbf{Model} &\textbf{Speedup} \\ \hline
        \textit{BIOMD0000000013} & $4.72$ & \textit{Section~\ref{subsec:mmk}} ($n=4$) & $5.31$ \\\hline
        \textit{BIOMD0000000023} & $2.35$ & \textit{Section~\ref{subsec:mmk}} ($n=5$) & $31.45$\\\hline
        \textit{BIOMD0000000033} & $1.92$ & \textit{Section~\ref{subsec:mmk}} ($n=6$)& $>100$\\\hline
        \textit{MODEL1502270000} & $8.98$ & & \\\hline
    \end{tabular}
    \caption{Speedup through a refined choice of the evaluation points}
    \label{tab:sparse_points}
\end{table}


\section{Implementation and performance}\label{sec:performance}
We implement both our algorithms described in Section~\ref{sec:algorithms} in the software CLUE~\cite{Ovchinnikov2020}.
This software is written in Python and before it allowed to compute optimal constrained lumpings for any polynomial dynamical system. 
We have extended the functionality to the case of rational dynamics (starting with version 1.5).
CLUE is available on the following GitHub repository:
\begin{center}\url{https://github.com/pogudingleb/CLUE}\end{center}
and can be installed using \texttt{pip} directly from GitHub using the command line
\begin{center}\texttt{pip install git+https://github.com/pogudingleb/CLUE}\end{center}
We refer to the README and tutorial in the repository for further details on how to use the software.

\subsection{Performance of Algorithm~\ref{alg:find_cl}}\label{sec:runtimes}

In this section, we report the runtimes for our implementation of our main algorithm, Algorithm~\ref{alg:find_cl}.
The timings reported below were measured on a laptop with Intel i7-9850H, 16GB RAM, and Python 3.8.10. 
The runtimes are average through 5 independent executions on each model. 
We have measured runtimes for two sets of models
\begin{itemize}
    \item Several models with rational dynamics from the BioModels database~\cite{BioModels2020} in Table~\ref{tab:bio_examples}. 
    For each model, the table contains the name in the database and a reference to the related paper.
    {The observables in Table~\ref{tab:bio_examples} were chosen as the states yielding nontrivial reductions, further analysis of these reductions is a question of future research.}
    \item The models we use as examples in this paper (see Section~\ref{sec:examples}) in Table~\ref{tab:examples_timing}.
\end{itemize}
{In these examples, we have fixed the probability bound $\varepsilon = 0.01$. Changing this value to $0.001$ may affect the timings measured, although its effect very moderate (less than 3\% in all models).}
In both tables, we also include the observable used in the lumping and the change of the dimension (in the format $\text{\em before} \to \text{\em after}$).

\begin{table}[H]
    \centering
    \begin{adjustbox}{center}
    \begin{tabular}{|>{\footnotesize\em}c|c|>{\footnotesize\em}l|l|c|}
        \hline
        {\normalsize\em\textbf{Name}} & \textbf{Reference} & \multicolumn{1}{c|}{\textbf{Obs.}} & \multicolumn{1}{c|}{\textbf{Size}} & \textbf{Time  (min.)} \\ \hline
        BIOMD0000000013 & \cite{Poolman2004} & x\_CO2 & $28 \rightarrow 25$ & $2.19$\\  \hline
        BIOMD0000000023 & \cite{Rohwer2001} & Fru & $13 \rightarrow 11$ & $0.04$\\  \hline
        BIOMD0000000113 & \cite{Dupont1992} & Y & $20 \rightarrow 12$ & $0.01$\\  \hline
        BIOMD0000000182 & \cite{Neves2008} & AC\_cyto\_mem & $45 \rightarrow 14$ & $44.1$\\  \hline
        BIOMD0000000313 & \cite{Raia2010} & IL13\_DecoyR & $35 \rightarrow 5$ & $0.09$\\  \hline
        BIOMD0000000448 & \cite{Brnnmark2013} & mTORC1a & $67 \rightarrow 48$ & $1.37$\\  \hline
        BIOMD0000000526 & \cite{Kallenberger2014} & DISC & $32 \rightarrow 19$ & $0.1$\\  \hline
        MODEL1502270000 & \cite{Weisse2015} & rmr & $46 \rightarrow 45$ & $56.44$\\  \hline
    \end{tabular}
    \end{adjustbox}
    \caption{Execution time for Algorithm~\ref{alg:find_cl} for models from BioModels}
    \label{tab:bio_examples}
\end{table}

\begin{table}[H]
    \centering
    \begin{adjustbox}{center}
    \begin{tabular}{|>{\footnotesize\em}c|>{\footnotesize}l|>{\footnotesize\em}l|c|c|}
        \hline
        {\normalsize\em\textbf{Example}} & \multicolumn{1}{c|}{\textbf{Obs.}} & \multicolumn{1}{c|}{\textbf{Size}} & \textbf{Time (min.)} \\ \hline
        Section~\ref{subsec:nerve_growth} &  freeEGFReceptor & $80 \rightarrow 5$ & $0.25$\\  \hline
        Section~\ref{subsec:mmk}, $n=6$ &  $x_1$ & $\ 6\rightarrow 2$ & $0.02$\\ \hline
        Section~\ref{subsec:mmk}, $n=7$ &  $x_1$ & $\ 7\rightarrow 2$ & $0.04$\\ \hline
        Section~\ref{subsec:mmk}, $n=8$ &  $x_1$ & $\ 8\rightarrow 2$ & $0.17$\\ \hline
        Section~\ref{subsec:mmk}, $n=9$ &  $x_1$ & $\ 9\rightarrow 2$ & $0.33$\\ \hline
        Section~\ref{subsec:mmk}, $n=10$ &  $x_1$ & $10\rightarrow 2$ & $3.74$\\ \hline
    \end{tabular}
    \end{adjustbox}
    \caption{Execution time for Algorithm~\ref{alg:find_cl} for the examples from Section~\ref{sec:examples}}
    \label{tab:examples_timing}
\end{table}

{For all the models in Table~\ref{tab:bio_examples}, the optimal reduction found by the algorithm is not of the type that could be found by ERODE, so the reduction by ERODE would be of higher dimension (but faster to compute).
For example, the reduction found by ERODE for BIOMD0000000182 would be of dimension 33 while we reduce further to 14. 
We leave a more detailed comparison of CLUE and ERODE for future research.}

From the timings above one can see that the complexity of the model for our algorithm is not solely determined by its order but rather by its structure: compare, for example, BIOMD0000000013 and BIOMD0000000448.

We have analyzed the breakdown of the total runtimes for these examples and observed that the most time-consuming part is Algorithm~\ref{alg:sampling} while reading the models and computing the actual lumping typically take less than $1\%$ of the full computation time.
Since the sampled matrices depend only on the model, not on the observables, we enhanced CLUE with caching of the matrices so that one can reuse them if wants to check several different sets of observables for a single model. 
In this case, every subsequent computation will be much faster ($\sim 100$ times) than reported in the table.

\subsection{Comparing Algorithm~\ref{alg:build_matrices_rat} and Algorithm~\ref{alg:find_cl}}\label{sec:comparison}

In our implementation, both Algorithms~\ref{alg:build_matrices_rat} and~\ref{alg:find_cl} are available.
For rational dynamical systems, the latter (probabilistic) is used by default.
For the polynomial systems, the original algorithm from CLUE is used.

Table~\ref{tab:app2_table} contains the runtimes of Algorithms~\ref{alg:build_matrices_rat} and~\ref{alg:find_cl} for several biological models and, for each model, we count the number of distinct denominators appearing in the right-hand side of the ODE system. 
We separate the polynomial systems that have been taken from the paper~\cite{Ovchinnikov2020} from the rational systems we studied in this paper.
Note that, for the polynomial systems, Algorithm~\ref{alg:find_cl} is essentially the original algorithm from CLUE.
One can observe that, if there is only a couple of denominators, then Algorithm~\ref{alg:build_matrices_rat} may be much faster but once the number of denominators grows, Algorithm~\ref{alg:find_cl} outperforms it substantially.
In the future, it would be interesting to determine an algorithm to use on the fly.

\begin{table}[H]
    \centering
    \begin{adjustbox}{center}
    \begin{tabular}{|>{\footnotesize\em}c|c|c|c|c|}
        \hline
        \em\normalsize\textbf{Polynomial models} & \textbf{Reference} & \textbf{\# denoms} & \textbf{Algorithm~\ref{alg:build_matrices_rat}} & \textbf{Algorithm~\ref{alg:find_cl}} \\ \hline
        Barua & \cite{Barua2009} & $0$ & $1.1913$ & $>30$ \\  \hline
        OrderedPhosphorylation &  \cite{Borisov2008} & $0$ & $0.0131$ & $>30$ \\  \hline
        MODEL1001150000 & \cite{Pepke2010} & $0$ & $0.0116$ & $>30$ \\  \hline
        MODEL8262229752 & \cite{Li2006} & $0$ & $0.0005$ & $0.0262$ \\  \hline
        fceri\_fi & \cite{Faeder2003} & $0$ & $0.0494$ & $>30$ \\  \hline
        ProteinPhosphorylation ($7$) & \cite{Sneddon2010} & $0$ & $3.433$ & $>30$ \\  \hline
        \hline
        \em\normalsize\textbf{Rational models} & \multicolumn{4}{c|}{} \\\hline
        BIOMD0000000526 & \cite{Kallenberger2014} & $1$ & $0.0222$ & $0.0972$ \\ \hline
        BIOMD0000000448 & \cite{Brnnmark2013} & $2$ & $0.0387$ & $1.3719$ \\ \hline
        BIOMD0000000313 & \cite{Raia2010} & $2$ & $0.0111$ & $0.0881$ \\ \hline
        BIOMD0000000113 & \cite{Dupont1992} & $4$ & $0.4914$ & $0.0073$ \\ \hline
        BIOMD0000000013 & \cite{Poolman2004} & $10$ & $>30$ & $2.1906$ \\ \hline
        BIOMD0000000023 & \cite{Rohwer2001} & $11$ & $>30$ & $0.0435$ \\ \hline
        BIOMD0000000033 & \cite{NGF2004} & $22$ & $>30$ & $2.3012$ \\ \hline

    \end{tabular}
    \end{adjustbox}
    \caption{Execution times for Algorithms~\ref{alg:build_matrices_rat} and~\ref{alg:sampling} (in minutes)}
    \label{tab:app2_table}
\end{table}


\section{Examples}\label{sec:examples}

\subsection{Michaelis-Menten kinetics with competing substrates}\label{subsec:mmk}

Consider an enzymatic reaction with a single enzyme $E$ and multiple competing substrates $S_1, \ldots, S_n$ (and the corresponding products $P_1, \ldots, P_n$).
The reaction can be described by the following chemical reaction network
\[
\ce{$S_i + E$ <=>[$k_{i, 1}$][$k_{i, -1}$] $SE$ ->[$k_{i, 2}$] $P_i$}\quad \text{ for }\;i = 1, \ldots, n.
\]
Then quasi-steady-state approximation can be used to deduce a system of ODEs for the substrate concentrations only.
For $n$ competing substrates, using the general approach due to~\cite{ChouTalalay} (for competing substrates, see also~\cite[Section 3]{Schnell2000}), we obtain the following ODE system
\begin{equation}\label{eq:competingMM}
\dot{x}_i = \frac{a_i x_i}{1 + \sum\limits_{j = 1}^n \frac{x_j}{K_j}} \quad \text{ for } i =1, \ldots, n,
\end{equation}
where $x_i$ is the concentration of $S_i$, $K_i = \frac{k_{i, -1} + k_{i, 2}}{k_{i, 1}}$, $a_i = \frac{k_{i, 2}E_0}{K_i}$, and $E_0$ is the total enzyme concentration.
Assume that we are interested in the dynamics of a particular substrate, say $S_1$.
We will now analyze possible constrained lumpings of the system~\eqref{eq:competingMM} depending on the relations between the parameters ($K_i$'s and $a_i$'s).
If $a_i$'s are arbitrary distinct numbers or distinct symbolic parameters, our algorithm shows that there are no nontrivial reductions. 
However, if some of $a_i$'s are equal, the situation becomes more interesting.

\begin{itemize}
    \item \emph{Simplest case: $a_2 = a_3 = \ldots = a_n$.}
    In this case, independently from $n$ (checked for $n \leqslant 10$), the algorithm produces the system
    \[
    \begin{cases}
      y_1' = \frac{a_1 y_1}{1 + \frac{y_1}{K_1} + y_2},\\
      y_2' = \frac{a_2 y_2}{1 + \frac{y_1}{K_1} + y_2},
    \end{cases}
    \quad \text{ where } \quad
    \begin{cases}
      y_1 = x_1,\\
      y_2 = \sum\limits_{i = 2}^n \frac{y_i}{K_i}.
    \end{cases}
    \]
    \item \emph{More general case: some of $a_i$'s are equal.}
    For example, if $n = 6$ and we have $a_2 = a_3$ and $a_4 = a_5 = a_6$, the optimal reduction produced by the algorithm will be
    \[
      \begin{cases}
          y_1' = \frac{a_1 y_1}{1 + \frac{y_1}{K_1} + y_2 + y_3},\\
          y_2' = \frac{a_2 y_2}{1 + \frac{y_1}{K_1} + y_2 + y_3},\\
          y_3' = \frac{a_4 y_3}{1 + \frac{y_1}{K_1} + y_2 + y_3},
      \end{cases}
      \quad \text{ where } \quad
      \begin{cases}
         y_1 = x_1,\\
         y_2 = \frac{x_2}{K_2} + \frac{x_3}{K_3},\\
         y_3 = \frac{x_4}{K_4} + \frac{x_5}{K_5} + \frac{x_6}{K_6}.
      \end{cases}
    \]
    More generally, if several of $a_i$'s are equal, the corresponding $x_i$'s are lumped together with the coefficients $\frac{1}{K_i}$, and the reduced model defines again the Michaelis-Menten kinetics for competing substrates.
    Interestingly, the substrates can be lumped together if the corresponding $a_i$'s are equal but not all the $k_{i, 1}, k_{i, -1}, k_{i, 2}$ (cf. the scaling transformation in~\cite[p. 161]{Schnell2000}).
\end{itemize}
The reduction above cannot be found by ERODE~\cite{Cardelli2017b} unless the $K_i = K_j$ whenever $a_i = a_j$ since the coefficients are not only ones and zeros. 
Note also that we treat all the parameters symbolically (instead simulating them as states with zero derivatives) so that they can appear in the coefficients of the lumping.
To the best of our knowledge, CLUE is the only lumping software with this feature.

Table~\ref{tab:examples_timing} in Section~\ref{sec:performance} reports the runtimes for different values of $n$.


\subsection{Nerve growth factor signaling}\label{subsec:nerve_growth}

Motivated by the study of differentiation of neuronal cells, Brown et al~\cite{NGF2004} considered a model describing the actions of nerve growth factor (NGF) and mitogenic epidermal growth factor (EGF) in rat pheochromocytoma (PC12) cells.
In the model, these factors stimulated extracellular regulated kinase (Erk) phosphorylation with distinct dynamical profiles via a network of intermediate signaling proteins, the network is shown in Figure~\ref{fig:EGFR}.
Each intermediate protein (and Erk) is modeled using two species: active and inactive states. 
The resulting model is described by a system of $32$ differential equations with $48$ parameters, the full system can be found in~\cite[Supplementary materials]{NGF2004} or as BIOMD0000000033 in the BioModels database~\cite{BioModels2020}.
The exact reduction of this model has been earlier studied in~\cite[Section 5.4]{TCS_large_scale} using ERODE.

\begin{figure}[H]
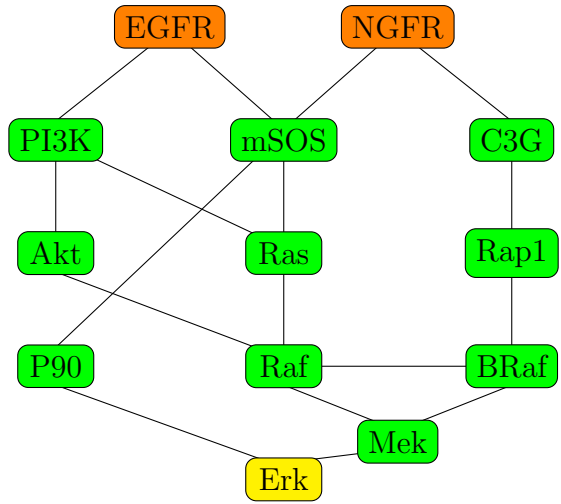

    \centering
    \tikzfig{EGFR}
    \caption{A network diagram describing the action of NGF and EGR on Erk (a simplified version of~\cite[Figure 1]{NGF2004})}
    \label{fig:EGFR}
\end{figure}

We have applied our algorithm to the model with the observable being the sum of all species (see Remark~\ref{rem:constrained_partition}).
The resulting reduction agrees with the one from~\cite[Section 5.4]{TCS_large_scale}, that is, the macro-variables are the following
\begin{itemize}
    \item concentrations of free and bound EGF and NGF and the corresponding receptors (EGFR and NGFR) remain separate variables;
    \item a single variable with constant dynamics equal to the sum of the concentrations of all other species (intermediate proteins and Erk) is introduced;
    \item only $4$ out of $48$ parameters remain in the reduced model.
\end{itemize}
Therefore, the reduced model is defined by 7 variables and $4$ parameters and captures exactly the dynamics of EGF and NGF.

\section{Conclusions}\label{sec:conclusions}

We have presented the first (to the best of our knowledge) algorithms for finding optimal constrained linear lumping of a rational dynamical system which non trivially extends the existing algorithm for the polynomial case.

While being based on the Jacobian-invariance criterion going back to~\cite{Li1989} and used in the polynomial case~\cite{Ovchinnikov2020}, our main algorithm approaches the key step of~\cite{Ovchinnikov2020}, turning a nonconstant Jacobian matrix into a finite collection of constant matrices, from a different angle, via automatic differentiation and randomized evaluation.
We implement our algorithms, report runtimes for them on a set of benchmarks, and demonstrate how they can be applied to models from the literature.

Directions for future research include extending the algorithm to models involving other functions such as exponential, logarithmic, and trigonometric.
Our current approach is not directly applicable to such functions as one can evaluate them at rational points only approximately, not exactly.

\subsection*{Acknowledgements}

We are grateful to Mirco Tribastone for helpful discussions.
We are grateful to the referees for their helpful suggestions.

\section*{Appendix: Proofs}

\invariant*
\begin{proof}
    Since $\langle J(\bx) \mid \bx \in \rr^n \text{ and } J(\bx) \text{ is well-defined}\rangle$ has finite dimension, then:
    \begin{itemize}
        \item There are points $\bx_1,\ldots,\bx_N \in \rr^n$ such that $J(\bx_1),\ldots, J(\bx_N)$ is a basis.
        \item There are matrices $J_1,\ldots,J_N \in \goth{B}$ that form also a basis.
    \end{itemize}
    
    Hence, there is a invertible matrix $C \in \rr^{N\times N}$ such that (the $n\times n$ matrices are considered as $n^2$-dimensional vectors)
    \[\begin{pmatrix}J_1\\\vdots\\J_N\end{pmatrix} 
    = C\begin{pmatrix}J(\bx_1)\\\vdots\\J(\bx_N)\end{pmatrix}.\]
    
    Assume that $L$ is a lumping, then~\cite[Propsition II.1]{Ovchinnikov2020} implies that there is $A(\bx)$ such that $A(\bx) L = L J(\bx)$. Hence, for the matrices defined by
    \[\begin{pmatrix}A_1\\\vdots\\A_N\end{pmatrix} 
    = C\begin{pmatrix}A(\bx_1)\\\vdots\\A(\bx_N)\end{pmatrix},\]
    we obtain $A_i L = L J_i$. So for all $J\in \goth{B}$ there is $A_J$ such that $A_J L = L J$.
    
    On the other hand, if for all $J\in \goth{B}$ there is $A_J$ such that $A_J L = L J$, then we can check that, if $J(\bx) = \sum_{i=1}^N \alpha_i(\bx) J(\bx_i)$, then
    \[LJ(\bx) = \left[(\alpha_1(\bx),\ldots,\alpha_N(\bx)) C^{-1} \begin{pmatrix} A_{J_1}\\\vdots\\A_{J_N}\end{pmatrix}\right] L,\]
    so the row space of $L$ is invariant under $J(\bx)$ and $L$ is a lumping.
\end{proof}

\sampling*
\begin{proof}
  Define $\mathbf{v}_1,\ldots\mathbf{v}_m \in \rr^{n^2}$ as the vector representations of the matrices $J(\bx_1), \ldots, J(\bx_m)$ and $\bw(x)$ be the corresponding vector representation for the matrix $J(\bx)$ with the entries being rational functions.
  By removing the corresponding evaluation points if necessary, we will further assume that $\mathbf{v}_1, \ldots, \mathbf{v}_m$ are linearly independent.
  The fact that $\langle J(\bx_i) \mid 1 \leqslant i \leqslant m \rangle \neq \langle J(\bx) \mid \bx \in \rr^n \rangle$ implies that
  \[
    \rank E = m + 1,\quad \text{where}\quad E := ( \mathbf{v}_1 \mid \mathbf{v}_2 \mid \ldots \mid \mathbf{v}_m \mid \bw(\bx)).
  \]
  Each maximal minor of $E$ is a linear combination of the entries of $\bw(\bx)$.
  Since $\rank E = m + 1$, at least one of these minors is a nonzero rational function, we write it as $\frac{A(\bx)}{B(\bx)}$.
  Note the degrees of the denominator and numerator of each entry of $\bw(\bx)$ are bounded by $2D_d$ and $D_d + D_n$, respectively.
  Therefore, we have
  \[
    \deg(A(\bx)) \leqslant D_n + (2m + 1) D_d,\qquad \deg(B(\bx)) \leqslant 2(m+1)D_d.
  \]
  Let $q(\bx)$ be the product of the denominators of $f_1, \ldots, f_n$.
  Then $\deg q(\bx) \leqslant n D_d$.
  
  Let $\bx_{m + 1} \in \{1,\ldots N\}^n$ as in the statement of the proposition.
  We would like to find an upper bound for
  \begin{equation}\label{equ:prob}
    \mathbb{P} [J(\bx_{m + 1}) \in \langle J(\bx_1), \ldots, J(\bx_m) \rangle \mid J(\bx_{m + 1}) \text{ is well-defined}].
  \end{equation}
  We write this as
  \[
  \frac{\mathbb{P}[J(\bx_{m + 1}) \in \langle J(\bx_1), \ldots, J(\bx_m) \rangle \text{ and } J(\bx_{m + 1}) \text{ is well-defined}]}{\mathbb{P}[J(\bx_{m + 1}) \text{ is well-defined}]}.
  \]
  For the numerator:
  \begin{multline*}
     \mathbb{P}[J(\bx_{m + 1}) \in \langle J(\bx_1), \ldots, J(\bx_m) \rangle \text{ and } J(\bx_{m + 1}) \text{ is well-defined}] \leqslant\\ \leqslant \mathbb{P} [A(\bx_{m + 1}) = 0] \leqslant \frac{D_n + (2m + 1)D_d}{N},
  \end{multline*}
  where the latter inequality follows from the Schwartz-Zippel lemma~\cite[Proposition~98]{Zippel}.
  Using this lemma again, we bound the probability of the denominator:
  \[
    \mathbb{P}[J(\bx_{m + 1}) \text{ is well-defined}] = \mathbb{P}[q(\bx) \neq 0] \geqslant 1 - \frac{nD_d}{N}.
  \]
  Putting everything together, we obtain
  \[
    \mathbb{P} [J(\bx_{m + 1}) \in \langle J(\bx_1), \ldots, J(\bx_m) \rangle \mid J(\bx_{m + 1}) \text{ is well-defined}] \leqslant \frac{D_n + (2m + 1)D_d}{N - nD_d}.
  \]
  Now a direct computation shows that 
  \[
    \frac{D_n + (2m + 1)D_d}{N - nD_d} < {\varepsilon} \Longleftrightarrow N > \frac{D_n + (2m + 1)D_d}{\varepsilon} + n D_d.
  \]
\end{proof}

\bibliographystyle{splncs04}
\bibliography{main}

\end{document}